\documentclass[a4paper]{llncs}
\usepackage[utf8]{inputenc}

\usepackage{ltl}
\usepackage{graphicx}
\usepackage{subfig}
\usepackage{amsmath, amssymb}
\usepackage[ruled,vlined,linesnumbered,commentsnumbered]{algorithm2e}
\SetAlFnt{\large}

\usepackage{pslatex}


\usepackage{tikz}
\usetikzlibrary{patterns}
\usetikzlibrary{arrows,shapes,automata,backgrounds,petri}
\tikzstyle{vertex}=[circle,minimum size=8pt,inner sep=0pt]
\tikzstyle{vEllipse}=[ellipse,minimum height=15pt,minimum width=30pt,
  inner sep=0pt,draw]
\tikzstyle{edge} = [draw,thick,->]
\tikzstyle{selected edge} = [draw,line width=5pt,line join=round,-,red!80]
\pgfdeclarelayer{background}
\pgfsetlayers{background,main}
\tikzstyle{sec vertex}=[circle,draw,minimum size=20pt,inner sep=0pt]
\tikzstyle{thd vertex}=[circle,draw,minimum size=8pt,inner sep=0pt]

\relpenalty=10000
\binoppenalty=10000

\addtolength{\textfloatsep}{-0.5em}

\newtheorem{thm}{Theorem}

\newtheorem{define}[thm]{Definition}

\title{Analysing Sanity of Requirements for Avionics Systems\thanks{ The
    research leading to these results has received funding from the European
    Union’s Seventh Framework Program (FP7/2007-2013) for CRYSTAL – Critical
    System Engineering Acceleration Joint Undertaking under grant agreement
    No. 332830 and from specific national programs and/or funding authorities.}\\(Preliminary Version)}

\author{Ji{\v r}{\'i} Barnat$^1$, Petr Bauch$^1$, Nikola Bene\v{s}$^1$, Lubo{\v
    s} Brim$^1$,\\ Jan Beran$^2$, and Tomáš Kratochv\'{i}la$^2$}

\institute{
  \begin{center}
    \begin{tabular}{cc}
      \begin{minipage}{0.45\linewidth}
        \begin{tabbing}
          1) \=Faculty of Informatics\\
          \>Masaryk University\\
          \>Brno, Czech Republic\\
        \end{tabbing}
      \end{minipage}
      & \quad
      \begin{minipage}{0.45\linewidth}
        \begin{tabbing}
          2) \=Honeywell International\\
          \>Aerospace, Advanced Technology Europe\\
          \>Brno, Czech Republic\\
        \end{tabbing}
      \end{minipage}
    \end{tabular}
  \end{center}
}

\begin{document}

\maketitle

\begin{abstract}
  In the last decade it became a common practice to formalise software
  requirements to improve the clarity of users' expectations. In this work we
  build on the fact that functional requirements can be expressed in temporal
  logic and we propose new sanity checking techniques that automatically detect
  flaws and suggest improvements of given requirements. Specifically, we
  describe and experimentally evaluate approaches to consistency and redundancy
  checking that identify all inconsistencies and pinpoint their exact source
  (the smallest inconsistent set). We further report on the experience obtained
  from employing the consistency and redundancy checking in an industrial
  environment. To complete the sanity checking we also describe a semi-automatic
  completeness evaluation that can assess the coverage of user requirements and
  suggest missing properties the user might have wanted to formulate. The
  usefulness of our completeness evaluation is demonstrated in a case study of
  an aeroplane control system.
\end{abstract}

\section{Introduction}

The earliest stages of software development entail among others the activity of
user requirements elicitation. The importance of clear specification of the
requirements in the contract-based development process is apparent from the
necessity of final-product compliance verification. Yet the specification itself
is rarely described formally. Nevertheless, the formal description is an
essential requirement for any kind of comprehensive verification. Recently,
there have been tendencies to use the mathematical language of temporal logics,
e.g. the \emph{Linear Temporal Logic} (LTL), to specify functional system
requirements. Restating requirements in a rigorous, formal way enables the
requirement engineers to scrutinise their insight into the problem and allows
for a considerably more thorough analysis of the final requirement
documents~\cite{HJC08}.

Later in the development, when the requirements are given and a model is
designed, the formal verification tools can provide a proof of correctness of
the system being developed with respect to formally written requirements. The
model of the system or even the system itself can be checked using model
checking~\cite{BBC10,CCG02} or theorem proving~\cite{BFG01} tools. If there are
some requirements the system does not meet, the cause has to be found and the
development reverted. The longer it takes to discover an error in the
development, the more expensive the error is to mend. Consequently, errors made
during requirements specification are among the most expensive ones in the whole
development.

Model checking is particularly efficient in finding bugs in the design, however,
it exhibits some shortcomings when it is applied to requirement analysis. In
particular, if the system satisfies the formula then the model checking
procedure only provides an affirmative answer and does not elaborate for the
reason of the satisfaction. It could be the case that, e.g. the specified
formula is a tautology, hence, it is satisfied for any system. Tautologies can
be obvious, such as $p\vee \neg p$ or subtle, such as $p\LTLu\neg(r\wedge q)$,
where $r$ stands for ``\emph{The value of variable x is smaller than 5}'' and
$q$ stands for ``\emph{The value of variable x is greater than 10}''. To
mitigate the situation a subsidiary approach, the sanity checking, was proposed
to check vacuity and coverage of requirements~\cite{Kup06}. Yet the existence of
a model is still a prerequisite which postpones the verification until later
phases in the development cycle.

The primary interest of this paper is to re-state and extend techniques of our
own~\cite{BBB12} that allow the developers of computer systems to check sanity
of their requirements when it matters most, i.e. during the requirements
stage. Sanity checking commonly consists of three parts: consistency checking,
redundancy checking, and checking completeness of requirements. A consistent set
of requirements is one that can be implemented in a single system. A vacuous
requirement is satisfied trivially by a given system and redundancy is the
equivalent of vacuity for the model-free case. Finally, a complete set of
requirements extends to all sensible behaviours of a system. All three notions
will be defined properly in Section~\ref{sec:pre}.

In our previous work we redefined the notion of sanity checking of requirements
written as LTL formulae and described its implementation and evaluation. The
novelty of our approach we recapitulate and extend in this paper lies in the
idea of liberating sanity checking from the necessity of having a model of the
system under development. Our approach to consistency and vacuity checking
presented before identified all inconsistent and vacuous subsets of the input
set of requirements. This considerably simplified the work of requirement
engineers because it allowed to pinpoint all the sources of inconsistencies. As
for completeness checking, we proposed a new behaviour-based coverage
metric. Assuming that the user specifies what behaviour of the system is
sensible, our coverage metric calculated what portion of this behaviour is
described by the requirements specifying the system itself. The method further
suggested new requirements to the user that would have improved the coverage and
thus ensured more accurate description of users' expectations.

\paragraph{Contribution}

The novelty of our consistency (redundancy) checking is that they produce all
inconsistent (redundant) sets instead of a yes/no answer and their efficiency is
demonstrated in an experimental evaluation. Finally, the completeness checking
presents a new behaviour-based coverage and suggests formulae that would improve
the coverage and, consequently, the rigour of the final requirements.

In this paper we extend~\cite{BBB12} with a modified workflow that allows
requirement engineers to explicitly specify universal ($\forall$) and
existential ($\exists$) nature of individual requirements allowing thus to more
naturally express some of the system properties. We also evaluate the robustness
and sensitivity of our consistency and completeness checking procedures with
respect to the use of different LTL to BA translators. Furthermore, we adopt
vacuity checking for individual LTL formulae into the framework and add proper
references to relevant previous work. Finally, we report on experience we
gathered in cooperation with Honeywell International when applying our
methodology on some real-life industrial use cases.

\subsection{Related Work}

The use of model checking with properties (specified in CTL) derived from
real-life avionics software specifications was successfully demonstrated
in~\cite{CAB98}. This paper intends to present a preliminary to such a
use of a model checking tool, because there the authors presupposed sanity of
their formulae. The idea of using coverage as a metric for completeness can be
traced back to software testing, where it is possible to use LTL requirements as
one of the coverage metrics~\cite{WRH06,RWH07}.

Model-based sanity checking was studied thoroughly and using various approaches,
but it is intrinsically different from model-free checking presented in this
paper. Completeness is measured using metrics based on the state space coverage
of the underlying model~\cite{CKK01,CKV01}. Vacuity of temporal formulae was
identified as a problem related to model checking and solutions were proposed
in~\cite{KV03} and in~\cite{Kup06}, again expecting existence of a model.

Checking consistency (or satisfiability) of temporal formulae is a well
understood problem solved using various techniques in many papers (most recently
using SAT-based approach in~\cite{RDB05} or in~\cite{RV07} where it was used as
a comparison between different LTL to B{\"u}chi translation techniques). The
classical problem is formulated as to decide whether a set of formulae is
internally consistent. In this paper, however, a more elaborate answer is
sought: specifically which of the formulae cause the inconsistency. The approach
is then extended to vacuity which is rarely used in model-free sanity checking.

We have adopted the idea of searching for the smallest inconsistent subset of a
set of requirements from the SAT community, where finding the minimal
unsatisfiable core (a set of propositional clauses) is an important feature of
SAT solvers, see e.g.~\cite{LS04}. The authors of~\cite{LS08} extend the
approach to finding all minimal unsatisfiable subsets, again in the context of
propositional logic. Their exclusion of checking those subsets containing known
unsatisfiable subsets was extended by our algorithm with a dual heuristic: exclude
also subsets with known satisfiable supersets.

A problem related to consistency is that of
\emph{realisability}~\cite{ALW89}. Realisability generalises consistency by
allowing some of the atomic propositions to be controlled by the environment. A
set of system requirements is then realisable if there is a strategy which can
react to arbitrary environment choices while satisfying the requirements. If no
propositions are controlled by the environment then the notion of realisability
becomes equivalent to consistency. A number of tools implement realisability,
RATSY~\cite{BCG10} for example, but again limit their responses to stating
whether the requirements are realisable as a whole. Thus, even though the theory
of minimal unrealisable cores had been known for some time~\cite{CRS08,KHB09},
searching for smallest inconsistent subsets is a feature unique for our tool. We
would also like to point out the theory described in~\cite{Sch12b} which extends
the unrealisability cores to search even within individual formulae, providing
even finer information on the quality of requirements.

Completeness of formal requirements is not as well-established and its
definition often differs. The most thorough research in algorithmic evaluation
of completeness was conducted in~\cite{HL95,Lev00,MTH03}. Authors of those
papers use RSML (Requirements State Machine Language) to specify requirements
which they translate (in the last paper) to CTL and to a theorem prover language
to discover inconsistencies and detect incompleteness. Their notion of
completeness is based on verifying that for every input value there is a
reaction described in the specification. This paper presents completeness as
considering all behaviours described as sensible (and either refuting or
requiring them). Finally, a novel semi-formal methodology is proposed in this
paper, that recommends new requirements to the user, that have the potential to
improve completeness of the input specification.

Another approach at defining and checking completeness was pursued
in~\cite{KGG99} in terms of evaluating the extend to which an implementation
describes the intended specification. Their evaluation is based on the
simulation preorder and searching for unimplemented transition/state
evidence. In order to create a metric that would drive the engine for generating
new requirements we have build an evaluation algorithm that enumerates the
similarity of two paths. Our notion of similarity is to certain extend based on
unimplemented transitions, but we carry the enumeration from two transition up
to the whole automata, thus finally obtaining our completeness metric. The
authors of~\cite{BB09} elaborate a method for measuring the quality of a set of
requirements by detecting the presence of such a subsets, that together require
behaviour that could not be detected by individual requirements. Similarly as
in~\cite{HL95}, the interesting behaviour is described in terms of reactions to
input values observed at discrete time instances, yet the properties are limited
to those of the form $\forall t, A_t(t)\implies Z_t(t)$.

Finally, the notion of coverage is well-established in the area of software
testing. The similarity with our work and that described for example
in~\cite{RLS03,FG03} lies in that both approaches attempt to generate a description
of sets of behaviours that remain uncovered. The dissimilarity lies in that we
describe these uncovered sets in term of software requirements (or LTL
formulae), whereas automated testing coverage produces collections of test,
i.e. individual runs of a particular system.

\section{Preliminaries}\label{sec:pre}

This section serves as a motivation for and a reminder of the model checking
process and its connection to sanity checking. A knowledgeable reader might find
it slow-paced and cursory, but its primary function is to justify the use of
formal specifications and, as such, requires more compliant approach.

\subsection{LTL Model Checking}

\begin{define} \label{def:ltl}
  Let $AP$ be the set of \emph{atomic propositions}. Then the recursive
  definition below specifies all well-formed \emph{LTL formulae} over $AP$,
  where $p\in AP$:
  \begin{center}
    $\varphi::=p\mid\neg\varphi\mid\varphi\wedge\varphi\mid \LTLx\varphi\mid
    \varphi\LTLu\varphi$
  \end{center}
\end{define}

\begin{example} \label{exa:ltl}
  There are some well-established syntactic simplifications of the LTL language,
  e.g. $\mathit{false}:=p\wedge\neg p$, $\mathit{true}:=\neg \mathit{false}$,
  $\varphi\Rightarrow\psi:=\neg(\varphi\wedge \neg\psi)$, $\LTLf\varphi:=$
  $\mathit{true}\LTLu\varphi$, $\LTLg\varphi:=\neg
  (\LTLf\neg\varphi)$. Assuming that $AP=\{\alpha:=(c=5),\beta:=(a\neq b)\}$,
  these are examples of well-formed LTL formulae: $\LTLg\beta,
  \alpha\LTLu\neg\beta$.\hfill$\triangle$
\end{example}

In classical model checking one usually verifies that a model of the system in
question satisfies the given set of LTL-specified requirements. That is not
possible in the context of this paper because in the requirements stage there is
no model to work with. Nevertheless, to better understand the background of LTL
model checking let us assume that the system is modelled as a \emph{Labelled
  Transition System} (LTS).

\begin{define} \label{def:lts}
  Let $\Sigma$ be a set of state labels (it will mostly hold that
  $\Sigma=AP$). Then an \emph{LTS} $M=(S, \rightarrow,\nu,S_0)$ is a tuple,
  where: $S$ is a set of states, $\rightarrow\subseteq S\times S$ is a
  transition relation, $\nu:S\rightarrow 2^{\Sigma}$ is a valuation function and
  $S_0\subseteq S$ is a~nonempty set of initial states. A function
  $r:\mathbb{N}\rightarrow S$ is an \emph{infinite run} over the states of $M$
  if $r(0)\in S_0, \forall i: r(i)\rightarrow r(i+1)$. The \emph{trace} or
  \emph{word} of a run is a function $w:\mathbb{N}\rightarrow 2^{\Sigma}$, where
  $w(i)=\nu(r(i))$.
\end{define}

An LTL formula states a property pertaining to an infinite trace (a trace that
does not have to be associated with a run). Assuming an LTS to be a model of a
computer program then a trace represents one specific execution of the
program. Also the infiniteness of the executions is not necessarily an error --
programs such as operating systems or controlling protocols are not supposed to
terminate.

\begin{define} \label{def:sem}
  Let $w$ be an infinite word and let $\varphi$ be an LTL formula over
  $\Sigma$. Then it is possible to decide if $w$ satisfies $\varphi$, $w\models
  \varphi$, based on the following rules:
  \[
  \begin{array}{lcl}
    w\models p &\ \mathit{iff} &\ p\in w(0),\\

    w\models \neg\varphi &\ \mathit{iff} &\ w\not\models\varphi,\\

    w\models \varphi_1\wedge\varphi_2 &\ \mathit{iff} &\ w\models \varphi_1
    \ \mathit{and}\ w\models \varphi_2,\\

    w\models \LTLx\varphi &\ \mathit{iff} &\ w_1\models\varphi,\\

    w\models \varphi_1\LTLu\varphi_2 &\ \mathit{iff} &\ \exists i.\forall j<i:
    w_j\models\varphi_1, w_i\models\varphi_2,
  \end{array}
  \]
  where $w(i)$ is the $i$-th letter of $w$ and $w_i$ is the $i$-th suffix of $w$.
\end{define}

\begin{example} \label{exa:ltlProp}
  Figure~\ref{fig:petProt} contains LTS for a process engaged in Peterson's
  mutual exclusion protocol. The protocol can control access to the critical
  section (state $CS$) for arbitrarily many processes that communicate using
  global variables to determine which process will be granted access next. The
  two LTL formulae verify the liveness property of the protocol:\\ 1: If a
  process \emph{waits} for an access to the critical section it will eventually
  get there.\\
  2: A process outside the critical section will eventually get inside,
  and this hold in any state of the system.\hfill$\triangle$
\end{example}

\begin{figure}[t]
  \centering
  \begin{tikzpicture}[scale=1.5,auto,swap]
  \clip (-0.4,-0.4) rectangle (6.6,1.2);
  \foreach \pos/\name in {{(0,0)/CS}, {(0,1)/NCS}, {(1.5,0.5)/wait}, {(3,0)/q3},
    {(3,1)/q2}} {
    \node[vEllipse] (\name) at \pos {\name};
  }
  \begin{scope}[>=angle 90]
    \foreach \from/\to in {CS/NCS, NCS/wait, wait/CS, wait/q2, q2/q3, q3/wait} {
      \path[edge] (\from) -- (\to);
    }
    \path[edge] (q3) to [out=45,in=325,loop] (q3);
  \end{scope}

  \node[vertex] (A) at (5.5,0.5)
    {$\begin{array}{l}
        1:\ \LTLg(wait\Rightarrow \LTLf CS)\\
        2:\ \LTLg(\neg CS\Rightarrow \LTLf CS)
      \end{array}$};
\end{tikzpicture}
  \caption{LTS for Peterson's mutual exclusion protocol (with only one process)
    and two liveness LTL formulae.}
  \label{fig:petProt}
\end{figure}
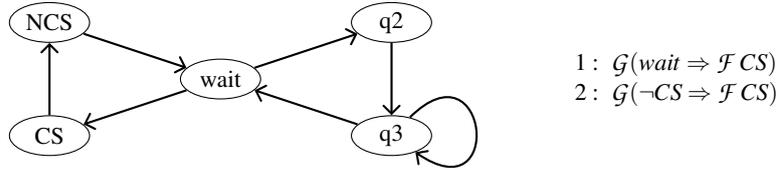

Traditionally, a system as a whole is considered to satisfy an LTL formula if
all its executions (all infinite words over the states of its LTS) do. However,
we might occasionally be also interested in the question whether at least one of
a~system's executions satisfies the given LTL formula. We thus suggest explicit
path quantification using the $\forall$ and $\exists$ quantifiers as follows:
For a system model $M$ and a formula $\varphi$, we write $M \models
\forall\varphi$ if all executions of $M$ satisfy $\varphi$, and $M \models
\exists\varphi$ if there is at least one execution of $M$ satisfying $\varphi$. We
call the former kind of path-quantified formulae \emph{universal} and the latter
kind \emph{existential}.

There are several notes to be made. First, we do not allow for nesting of
path-quantified formulae as we want to avoid the complexity of dealing with full
CTL$^*$. We only occasionally use the negation operator with the meaning $\neg
\forall \varphi \equiv \exists \neg \varphi$ and vice versa. Second, one of the
advantages of making the hidden universal quantification of LTL more explicit is
to reduce the confusion over the fact that LTL seemingly does not follow the law
of excluded middle: We can have a~system model that satisfies neither $\varphi$
nor $\neg\varphi$. Third, using path-quantified LTL formulae is really just
a~convenience, as it does not make the verification task any harder. Indeed, to
verify whether a~system model satisfies $\exists\varphi$ is the same as to
verify whether the system model satisfies $\forall\neg\varphi$ and invert the
answer. Efficient verification of that satisfaction, however, requires a more
systematic approach than enumeration of all executions. An example of a
successful approach is the enumerative approach using \emph{B{\"u}chi automata}.

\begin{define} \label{def:ba}
  A \emph{B{\"u}chi automaton} is a pair $A=(M,F)$, where $M$ is an LTS and
  $F\subseteq S$. An automaton $A$ accepts an infinite word $w$ ($w\in L(A)$) if
  there exists a run $r$ for $w$ in $M$ and there is a state from $F$ that
  appears infinitely often on $r$, i.e. $\forall i\exists j>i: r(j)\in F$.
\end{define}

Arbitrary LTL formula $\varphi$ can be transformed into a B{\"u}chi automaton
$A_{\varphi}$ such that $w\models\varphi\Leftrightarrow w\in
L(A_{\varphi})$. Also checking that every execution satisfies $\varphi$ is
equivalent to checking that no execution satisfies $\neg\varphi$. It only
remains to combine the LTS model of the given system $M$ with $A_{\neg\varphi}$
in such a way that the resulting automaton will accept exactly those words of
$M$ that violate $\varphi$. Finally, deciding existence of such a word -- and by
extension verifying correctness of the system -- has been shown equivalent to
finding accepting cycle in a graph.

In the following sections we shall usually deal with the standard approach to
LTL first, i.e.~consider implicit universal path quantification of the
formulae. We then follow with the extension to existential path-quantified
formulae.

\subsection{Model-based Sanity Checking}

As described in the introduction the model checking procedure is not designed to
decide why was a certain property satisfied in a given system. That is a
problem, however, because the reasons for satisfaction might be the wrong
ones. If for example the system is modelled erroneously or the formula is not
appropriate for the system then it is still possible to receive a positive
answer.

These kinds of inconsistencies between the model and the formula are detected
using sanity checking techniques, namely \emph{vacuity} and \emph{coverage}. In
this paper they will be described for comparison with their model-free versions;
interested reader should consult for example~\cite{Kup06} for more details.

Let $K$ be an LTS, $\varphi$ a formula and $\psi$ its subformula. Let further
$\varphi[\psi'/\psi]$ be the modified formula $\varphi$ in which its subformula
$\psi$ is substituted by $\psi'$. Then $\psi$ \emph{does not affect the truth
  value of} $\varphi$ \emph{in} $K$ if the following property holds: $K$
satisfies $\varphi[\psi'/\psi]$ for every formula $\psi'$ iff $K$ satisfies
$\varphi$. Then a system $K$ \emph{satisfies a formula} $\varphi$
\emph{vacuously} if $K$ satisfies $\varphi$ and there is a subformula $\psi$ of
$\varphi$ such that $\psi$ does not affect $\varphi$ in $K$.

A state $s$ of an LTS $K$ is $q$-\emph{covered by} $\varphi$, for a formula
$\varphi$ and an atomic proposition $q$, if $K$ satisfies $\varphi$ but
$\tilde{K}_{s,q}$ does not satisfy $\varphi$. There $\tilde{K}_{s,q}$ stands for
an LTS equivalent to $K$ except the valuation of $q$ in the state $s$ is
flipped.

If one could extract the underlying ideas of vacuity and coverage and show that
they do not necessarily require the existence of a model, it would be possible
to extrapolate these notions into the model-free environment. Vacuity states
that the satisfaction of a formula is given extrinsically (by \emph{an}
environment) and is not related to the formula itself. Thus we may consider the
remaining formulae from the set of requirements to constitute the
environment. Coverage, on the other hand, attempts to capture the amount of
system behaviour that is described by the formulae. Again, we can use the
remaining formulae to form the system, but in order to establish a coverage
metric capable of differentiating various formulae, we will need to extend the
notion from state-coverage to path-coverage.

Although these notions of sanity are dependent on the system that is being
verified, we might still utilise at least the notion of vacuity in a~model-less
setting, using a~concept of \emph{vacuity witnesses}. A~vacuity witness to
a~given formula $\varphi$ is a~formula $\psi$ with the following property: If
a~system $K$ satisfies $\psi$ then it satisfies $\varphi$ vacuously. A~standard
example is the request-response formula $\LTLg( req \Rightarrow \LTLf resp )$
``every request is followed by a response'', whose two vacuity witnesses are
$\LTLg( \neg req )$ ``no request ever happens'' and $\LTLg\LTLf( resp )$
``responses happen infinitely often''. The vacuity witnesses for a~given formula
can be automatically generated, see~\cite{Beer2001}. Moreover, as it is desired
for the vacuity witnesses not to hold, we may include this requirement in our
original set of requirements with the help of path-quantified LTL formulae. For
example, if our set of requirements contains the formula $\forall\LTLg( req
\Rightarrow \LTLf resp )$, we add the formulae $\exists\LTLf( req )$ and
$\exists\LTLf\LTLg( \neg resp )$ to the set -- those are the negations of the
vacuity witnesses.

In the rest of this paper, we translate the concepts of model-based sanity into
model-free environment. However, to avoid confusion, we use the word
\emph{redundancy} instead of model-less vacuity.\footnote{This is in contrast to
  the original paper~\cite{BBB12} where \emph{redundancy} has been called
  \emph{vacuity} as the paper did not include vacuity witnesses.} We further
supplement these notions with consistency verification thus arriving at a method
that would aid the creation of a reasonable set of requirements, i.e. a set such
that it is reasonable to ask if a system satisfies it or not. If, however, the
intention is to build a system based on this set, then a further property of
such set would be desirable and that the set completely describes the system
under development. Consequently, a \emph{sane} set of requirements is
consistent, without redundancies, and complete in a sense we will formalise in
Section~\ref{sec:compl}.

\section{Checking Consistency and Redundancy}

As various studies concluded, undetected errors made early in the development
cycle are the most expensive to eradicate. Thus it is very important that the
outcome of the requirements stage -- a database of well-formed, traceable
requirements -- is what the customer intended and that nothing was omitted (not
even unintentionally). While a procedure that would ensure the two properties
cannot be automated, this paper proposes a methodology to check the sanity of
requirements. In the following the \emph{sanity checking} will be considered to
consist of 3 related tasks: \emph{consistency}, \emph{redundancy} and
\emph{completeness} checking. As consistency and redundancy are more closely
related, we focus on them in this section and dedicate the following section
solely to completeness.

\begin{define} \label{def:consistency}
  A set $\Gamma$ of LTL formulae over $AP$ is \emph{consistent} if $\exists w\in
  AP^{\omega}:w$ satisfies $\bigwedge\Gamma$. Checking consistency of a set
  $\Gamma$ entails finding all minimal inconsistent subsets of $\Gamma$. A
  formula $\varphi$ is \emph{redundant} with respect to a set of formulae
  $\Gamma$ if $\bigwedge\Gamma\Rightarrow\varphi$. To check redundancy of a set
  $\Gamma$ entails finding all pairs of
  $\langle\varphi\in\Gamma,\Phi\subseteq\Gamma\rangle$ such that $\Phi$ is
  consistent and $\Phi\Rightarrow\varphi$ (and for no $\Phi' \subseteq \Phi$
  does it hold that $\Phi'\Rightarrow\varphi$).
\end{define}

The existence of the appropriate $w$ can be tested by constructing
$A_{\bigwedge\Gamma}$ and checking that $L(A_{\bigwedge\Gamma})$ is
non-empty. The procedure is effectively equivalent to model checking where the
model is a clique over the graph with one vertex for every element of $2^{AP}$
(allowing every possible behaviour).

This approach to consistency and redundancy is especially efficient if a
non-trivial set of requirements needs to be processed and standard sanity
checking would only reveal if there is an inconsistency (or redundancy) but
would not be able to locate the source. Furthermore, dealing with larger sets of
requirements entails the possibility that there will be several inconsistent
subsets or that a formula is redundant due to multiple small subsets. Each of
these conflicting subsets needs to be considered separately which can be
facilitated using the methodology proposed in this paper.

\begin{example} \label{exa:sanEx}
  Let us assume that there are five requirements formalised as LTL formulae over
  a set of atomic propositions (two-valued signals) $\{p,q,a\}$. They are
  \begin{enumerate}
  \item Eventually, the presence of signal $p$ will entail its continued
    presence until the signal $q$ is observed.\hfill$\varphi_1=\LTLf(p\Rightarrow
    p\LTLu q)$
  \item The signal $p$ will occur infinitely often.\hfill$\varphi_2=\LTLg\LTLf
    p$
  \item The signals $a$ and $p$ cannot occur at the same
    time. \hfill$\varphi_3=\LTLg\neg(a\wedge p)$
  \item Each occurrence of the signal $q$ requires that signal $a$ was observed
    in the previous time step. \hfill$\varphi_4=\LTLg(\LTLx
  q\Rightarrow a)$
  \item The signal $q$ will occur infinitely often.\hfill$\varphi_5=\LTLg\LTLf
    q$
  \end{enumerate}
  In this set the formula $\varphi_4$ is inconsistent due to the first 3
  formulae and the last formula is redundant with respect to (implied by) the
  first 2 formulae.\hfill$\triangle$
\end{example}

We now reformulate the notions of Definition~\ref{def:consistency} to extend
also to path-quantified LTL formulae. Again, we are interested in finding the
minimal inconsistent subsets and minimal redundancy witnesses.

\begin{define}
  A~set $\Gamma$ of path-quantified LTL formulae over $AP$ is \emph{consistent}
  if there is a~system model $M$ such that $M \models \varphi$ for all $\varphi
  \in \Gamma$. A~path-quantified formula $\varphi$ is \emph{redundant} with
  respect to a~set of formulae $\Gamma$ if every model $M$ satisfying all
  formulae of $\Gamma$ also satisfies $\varphi$.
\end{define}

We now show that checking consistency and redundancy in the path-quantified
setting can be easily reduced to the case in the standard setting. Let us
partition the set $\Gamma$ into existential formulae $\exists\Gamma$ and
universal formulae $\forall\Gamma$, i.e. $\exists\Gamma:= \Gamma \cap
\{\exists\varphi\}$, for $\varphi$ quantifier-free, and
$\forall\Gamma:=\Gamma\setminus \exists\Gamma$. Further, let $\Psi1\exists
:=\{\Phi\cup\{\varphi\}\mid \Phi\subseteq \forall\Gamma, \varphi \in
\exists\Gamma\}$.

\begin{theorem}
  In order to locate all smallest inconsistent subsets of $\Gamma$ is suffices
  to search among $\Psi1\exists$. Similarly, checking
  redundancy may correctly be limited to checking among pairs
  \[
  \begin{array}{ll}
    \langle\varphi\in\Gamma,\Phi\subseteq \forall\Gamma\rangle &
    \varphi\ \text{universal}\\
    \langle \forall\psi,\Phi\in \Psi1\exists\cup \forall\Gamma \rangle &
    \varphi\ \text{existential}, \varphi=\exists\psi.
  \end{array}
  \]
\end{theorem}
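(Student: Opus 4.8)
The plan is to reduce the path-quantified consistency and redundancy questions to the already-understood standard (implicitly universal) setting by exploiting two structural facts: (1) a conjunction of universal formulae behaves like a single universal formula, so $\forall\Gamma$ collapses to one LTL satisfiability problem, and (2) existential formulae impose only a single-witness-path demand that cannot constrain one another except through the universal part. Let me address consistency first, then redundancy.

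For consistency, I would argue that a set $\Phi$ is inconsistent in the path-quantified sense exactly when its universal subset forbids the existence of the required witness paths. The key observation is that $M\models\forall\varphi$ for all $\forall\varphi\in\Phi$ means every run of $M$ satisfies each such $\varphi$, while $M\models\exists\psi$ means \emph{some} run satisfies $\psi$. Crucially, several existential requirements $\exists\psi_1,\dots,\exists\psi_k$ can always be satisfied independently in a single model (by taking a model rich enough to contain a separate witness path for each), \emph{provided} each witness is compatible with the universal constraints. Hence a minimal inconsistency can never contain two existential formulae: if a set containing $\exists\psi_1$ and $\exists\psi_2$ were inconsistent, one could build a model realising whichever existential witnesses are still satisfiable alongside $\forall\Gamma$, so removing one existential formula already restores consistency, contradicting minimality. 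Therefore every minimal inconsistent subset consists of some $\Phi\subseteq\forall\Gamma$ together with at most one existential formula, which is precisely the family $\Psi1\exists$ (taking the empty existential part to recover the purely universal inconsistencies, which are already captured as the $\Phi=\forall\Gamma$ slices). I would then verify that checking consistency of such a set reduces to a standard LTL satisfiability check: $\Phi\cup\{\exists\psi\}$ is consistent iff the ordinary LTL formula $(\bigwedge\Phi)\wedge\psi$ is satisfiable, since a single word satisfying this conjunction can be unrolled into the needed universal-and-witness model.

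For redundancy, I would split on the quantifier of the candidate formula $\varphi$. If $\varphi=\forall\chi$ is universal, then every model of $\Gamma$ satisfying $\varphi$ is a purely universal implication: the existential requirements of $\Gamma$ only assert the presence of extra paths and can never \emph{force} a universal property to hold on all runs, so they are irrelevant to deriving $\forall\chi$. Thus redundancy of a universal $\varphi$ depends only on a universal subset $\Phi\subseteq\forall\Gamma$, matching the first line of the displayed table, and reduces to the standard check $(\bigwedge\Phi)\Rightarrow\chi$. If instead $\varphi=\exists\psi$ is existential, it is entailed when every model guaranteeing the relevant behaviour already contains a $\psi$-witness; this forcing can come either from the universal core alone (a $\Phi\subseteq\forall\Gamma$ that makes $\psi$ unavoidable on some run) or from a universal core combined with one existing existential witness that can be massaged into a $\psi$-witness, which is exactly $\Phi\in\Psi1\exists\cup\forall\Gamma$.

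The main obstacle I anticipate is the existential redundancy case: justifying that at most one existential formula from $\Gamma$ is ever needed to witness $\exists\psi$, and that no genuine interaction among multiple existential assumptions can create new entailments. I would handle this by the same independence argument as for consistency, showing that if $\exists\psi$ followed from a set containing two distinct existential assumptions, one could exhibit a model satisfying $\Gamma$ but violating $\exists\psi$ by deleting one of the witness paths, contradicting entailment. A secondary subtlety is the minimality bookkeeping—ensuring that restricting the search to $\Psi1\exists$ does not miss any \emph{smallest} set—which I would dispatch by noting the reduction preserves subset inclusion, so minimal sets in the restricted family coincide with the global minimal ones.
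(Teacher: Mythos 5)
Your proposal is correct, and its consistency half is essentially the paper's own argument: minimal inconsistent subsets contain at most one existential formula, and a set with at most one existential formula is consistent iff its de-quantified conjunction $(\bigwedge\Phi)\wedge\psi$ is satisfiable (your single-word/unrolled-model step is the paper's ``single-path model'' observation). Where you genuinely diverge is redundancy. The paper dispatches both lines of the table with a single reduction: $\varphi$ is redundant with respect to $\Phi$ iff $\Phi\cup\{\neg\varphi\}$ is inconsistent; since $\neg\forall\chi\equiv\exists\neg\chi$ and $\neg\exists\psi\equiv\forall\neg\psi$, the ``at most one existential'' bound on minimal inconsistent sets immediately forces $\Phi\subseteq\forall\Gamma$ for universal targets and $\Phi\in\Psi1\exists\cup\forall\Gamma$ for existential ones. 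That route is shorter and directly justifies the implementation (the modified \texttt{createConj} call that conjoins $\neg\varphi_{i_k}$). You instead prove each case semantically via disjoint unions of models, which preserve universal satisfaction while accumulating existential witnesses. This buys rigour the paper leaves implicit: its ``first observation'' as literally stated (two self-consistent existential formulae, in isolation, are consistent) does not by itself yield the claimed bound in the presence of universal formulae, and your union construction is exactly the missing glue; it also carries both redundancy cases without the detour through negation.

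Two repairs you should make. First, your claim that existential members of $\Gamma$ ``can never force a universal property'' holds only when the witness set is consistent — an existential formula incompatible with the universal core entails everything vacuously. This is harmless here because the redundancy-checking task, as defined in the paper, only asks for pairs with consistent $\Phi$, but you must invoke that proviso explicitly. Second, in the existential-target case, ``deleting one of the witness paths'' is the wrong surgery, since deletion may destroy other existential witnesses; the correct instance of your own independence argument is to take the disjoint union of two countermodels, one witnessing $\Phi\cup\{e_1\}\not\Rightarrow\exists\psi$ and one witnessing $\Phi\cup\{e_2\}\not\Rightarrow\exists\psi$: that union satisfies $\Phi\cup\{e_1,e_2\}$ and still contains no $\psi$-path, giving the contrapositive you need.
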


\begin{proof}
  Let us first consider consistency. Clearly, if $\Gamma$ only consists of
  universal formulae, we may simply ignore their quantifiers. In the case of
  $\Gamma$ containing existential formulae, we can make the following two
  observations:
  \begin{itemize}
  \item First observation: Two existential formulae are always consistent
    provided that they are both self-consistent, i.e.~none of them is equal to
    \emph{false}. This can be easily seen as we can always provide a~model with
    exactly two paths, each satisfying one of the formulae. This observations tells
    us that every minimal inconsistent subset contains at most one existential
    formula.
  \item Second observation: Let us write $\Gamma^\forall$ to denote the set
    $\Gamma$ in which all path quantifiers have been changed to $\forall$. If
    the set $\Gamma$ contains at most one existential formula then the following
    holds: $\Gamma$ is consistent iff $\Gamma^\forall$ is consistent. One
    direction is obvious, the other follows from the fact that we can always
    provide a~single-path model as a~proof of consistency of $\Gamma^\forall$.
  \end{itemize}
  The result of the two observations is that if we are careful to never consider
  sets with more than one existential formula, we may freely ignore the
  path-quantifiers.

  Let us now consider redundancy. It is clear that the following holds:
  $\varphi$ is redundant with respect to $\Gamma$ iff $\Gamma \cup \{ \neg
  \varphi \}$ is inconsistent. This means that redundancy checking is easily
  reduced to consistency checking. This fact holds in the path-quantified
  setting as well as in the standard one and will be used later in the
  implementation. Note that the first observation above has the following
  consequences: If $\varphi$ is a~universal formula, then it makes sense to
  consider $\Gamma$ with universal formulae only, as $\neg \varphi$ is
  existential. Furthermore, if $\varphi$ is an existential formula, it makes
  sense to consider $\Gamma$ containing at most one existential
  formula.\hfill\qed
\end{proof}

From the above theorem it follows that all formulae that require consideration
are universal, which is the implicit quantifier of quantifier-free formulae. In
other words the two statements above enable the same algorithms for checking
sanity, described in the next section, to be used in the path-quantified setting
as well. Yet the validity of the two statements requires a more detailed
argumentation. Before we come to the implementation, let us illustrate the
interplay between the concepts discussed so far (model-less consistency,
redundancy, and vacuity witnesses) on a more comprehensive example.

\begin{example}
Consider a~set of requirements containing just two items: 
\begin{enumerate}
\item If a signal $a$ is ever activated, it will remain active indefinitely,
  i.e. in every time step it holds that if $a$ is active now it will also be
  active in the next time step. Translating this natural language requirement to
  LTL we arrive at:\hfill $\varphi_1=\LTLg( a \Rightarrow \LTLx a )$
\item If a signal $a$ is ever activated, it will not be active in the next time
  step, i.e. a must never hold in two successive steps. After translation the
  appropriate LTL formula thus stands:\hfill $\varphi_2=\LTLg( a
  \Rightarrow \LTLx \neg a )$
\end{enumerate}
Note that although such requirements might seem inconsistent at first glance,
they are, in fact, consistent as any consistency checking algorithm could
demonstrate. What the following redundancy analysis is about to reveal is that,
while consistent, it is not reasonable to require both formulae to hold at the
same time.

Let us elaborate and formalise the reasoning step of redundancy checking. We
begin by constructing the vacuity witnesses according to the method
in~\cite{Beer2001}. The vacuity witnesses for $\varphi_1$ are: $\LTLg(\neg a)$,
$\LTLg\LTLx a$. The first witness is quite intuitive, if the signal $a$ is never
activated in a system then requiring $\varphi_1$ to hold is
unreasonable. Similarly, if at every time step we have that $a$ will be active
in the next step then clearly the presence of $a$ in this time step is
irrelevant. Exactly the same kind of mechanisable reasoning leads to the vacuity
witnesses for $\varphi_2$: $\LTLg(\neg a)$, $\LTLg\LTLx(\neg a)$. We thus create
the new requirements as described in the previous section. The set of
requirements now contains $\forall\varphi_1$, $\forall\varphi_2$, $\exists\LTLf
a$, $\exists\LTLf\LTLx(\neg a)$, $\exists\LTLf\LTLx a$. Take for example
$\exists\LTLf\LTLx(\neg a)$, resulting from the negated witness $\LTLg\LTLx
a$. Being existentially quantified, this requirement demands the existence of at
least one computation of the system, one on which a state whose successor does
not observe $a$ will eventually occur. Note that we have converted the implicit
formulae into universal ones and ignored the duplicity of the vacuity witnesses.

We now check the redundancy of the requirements and discover that the formula
$\exists\LTLf\LTLx a$ implies $\exists\LTLf a$, we thus drop $\exists\LTLf a$
from the set. The remaining four formulae are checked for consistency. We find
$\{\forall\varphi_1, \forall\varphi_2, \exists\LTLf\LTLx a\}$ as the only
minimal inconsistent subset. The conclusion is that although $\varphi_1$ and
$\varphi_2$ are consistent, the only models that satisfy both of them satisfy
them redundantly. The two formulae can thus be said to be redundantly
consistent.\hfill$\triangle$
\end{example}

\subsection{Implementation of Consistency Checking}

Let us henceforth denote one specific instance of consistency (or redundancy)
checking as a \emph{check}. For consistency and a set $\Gamma$ it means to check
that for some $\gamma\subseteq\Gamma$ is $\bigwedge\gamma$ satisfiable. For
redundancy it means for $\gamma\subseteq\Gamma$ and $\varphi\in\Gamma$ to check
that $\bigwedge\gamma\Rightarrow\varphi$ is satisfiable. In the worst case both
consistency and redundancy checking would require an exponential number of
checks. However, the proposed algorithm considers previous results and only
performs the checks that need to be tested.

Both consistency and redundancy checking use three data structures that facilitate
the computation. First, there is the queue of verification tasks called
$\mathit{Pool}$, then there are two sets, $\mathit{Con}$ and $\mathit{Incon}$,
which store the consistent and inconsistent combinations found so far. Finally,
each individual \emph{task} contains a set of integers (that uniquely identifies
formulae from $\Gamma$) and a \emph{flag} value (containing three bits for three
binary properties). First, whether the satisfaction check was already performed
or not. Second, if the combination is consistent. And the third bit specifies
the direction in subset relation (up or down in the Hasse diagram) in which the
algorithm will continue. The successors will be either subsets or supersets of
the current combination.

\begin{figure}[t]
\begin{minipage}{0.49\textwidth}
\begin{algorithm}[H]
  {\normalsize
  \input{consistency.tex}
  \caption{Consistency Check}
  \label{alg:con}}
\end{algorithm}
\end{minipage}
\vrule\hspace{0.01em}
\begin{minipage}{0.49\textwidth}
\begin{algorithm}[H]
  {\normalsize
  \input{successors.tex}
  \caption{\texttt{genSuccs}($\mathit{Task}$ \textsf{t})}
  \label{alg:succs}}
\end{algorithm}
\end{minipage}
\end{figure}

The idea behind consistency checking is very simple (listed as
Algorithm~\ref{alg:con}). The pool contains all the tasks to be performed and
these tasks are of two types: either to check consistency of the combination or
to generate successors. The symmetry of the solution allows for parallel
processing (multiple threads performing the Algorithm~\ref{alg:con} at the same
time) given that the data structures are properly protected from race
conditions. The pool needs to be initialised with all single element subsets of
$\Gamma$ and $\Gamma$ itself, thus in the subsequent iteration will be checked
the supersets of the former and subsets of the latter.

Algorithm~\ref{alg:succs} is called when the task \textsf{t} on the top of
$\mathit{Pool}$ is already checked. At this point either all subsets or all
supersets of \textsf{t} should be enqueued as tasks. But not all successors need
to be inspected, e.g. if \textsf{t} is consistent then also all its subsets will
be consistent -- that is clearly true and no subset of \textsf{t} needs to be
checked.

That observation is utilised again in Algorithm~\ref{alg:sups}. It does not
suffice to stop generating subsets and supersets when its immediate predecessors
are found consistent (inconsistent), because it can also happen that the
combination to be checked was formed in a different branch of the Hasse diagram
of the subset relation. In order to prevent redundant satisfiability checks two
sets are maintained $\mathit{Con}$ and $\mathit{Incon}$ (see how these are used
on line~\ref{algLine:sups1} of Algorithm~\ref{alg:sups}).

\begin{figure}[t]
\begin{minipage}{0.49\textwidth}

\begin{algorithm}[H]
  {\normalsize
  \input{superset.tex}
  \caption{\small\texttt{genSupsets}($\mathit{Task}$ \textsf{t})}
  \label{alg:sups}}
\end{algorithm}

\end{minipage}
\vrule\hspace{0.01em}
\begin{minipage}{0.49\textwidth}

\begin{algorithm}[H]
  {\normalsize
  \input{verification.tex}
  \caption{\small\texttt{verCons}(\textsf{t}=$\langle i_1,\ldots,i_j\rangle$)}
  \label{alg:ver}}
\end{algorithm}

\end{minipage}
\end{figure}

The actual consistency (and quite similarly also redundancy) checking is less
complicated (see Algorithm~\ref{alg:ver}) and may even be delegated to a third
party tool. First, the conjunction of formulae encoded in the task is
created. In our setting we then check the appropriate B{\"u}chi automaton
for the existence of an accepting cycle: using nested DFS~\cite{CVW92}. Hence
the Algorithm~\ref{alg:ver} can easily be substituted with
another method for checking consistency of a set of LTL formulae. A
realisability checking tool, such as RATSY, could also be applied: one only
needs to state that all signals are controlled by the system.

\subsubsection{Extension to Redundancy Checking}

The only difference when performing the redundancy checking is that the task
\textsf{t} consists of a list $\langle i_1,\ldots,i_j\rangle$ which can be
empty, and one index $i_k$. Since the task is to decide whether
$\varphi_{i_1}\wedge\ldots\wedge\varphi_{i_j}\Rightarrow\varphi_{i_k}$ the
line~\ref{algLine:ver1} needs to be altered to:\\ $\textsf{F} \leftarrow
\texttt{createConj}(\varphi_{i_1}, \ldots, \varphi_{i_j}, \neg
\varphi_{i_k})$\\ This is due to the fact discussed in the previous section,
namely that the satisfiability of $\varphi_{i_1} \wedge \ldots \wedge
\varphi_{i_j} \wedge \neg \varphi_{i_k}$ implies $\varphi_{i_1} \wedge \ldots
\wedge \varphi_{i_j} \not \Rightarrow \varphi_{i_k}$, i.e. $\varphi_{i_k}$ is
not redundant with respect to $\{\varphi_{i_1}, \ldots, \varphi_{i_j}\}$.

Discarding the B{\"u}chi automata in every iteration may seem unnecessarily
wasteful, especially since synchronous composition of two (and more) automata is
a feasible operation. However, the size of an automaton created by composition
is a multiplication of the sizes of the automata being composed. Furthermore, it
would not be possible to use the size optimising techniques employed in LTL to
B{\"u}chi translation. And these techniques work particularly well in our case,
because the translated formulae (conjunctions of requirements) have relatively
small nesting depth (maximal depth among requirements $+1$).

\begin{example}
  We will demonstrate the search for smallest inconsistent subsets on the
  following set of requirements.
  \begin{enumerate}
  \item A request to increase the cabin temperature $a$ may always be issued and
    each such request will eventually be granted by heating unit
    $b$. \hfill$\varphi_1=\LTLg(\LTLf a\wedge(a\Rightarrow\LTLf b))$
  \item There is only a finite amount of fuel for the heating
    unit. \hfill$\varphi_2=\LTLf\LTLg(\neg b)$
  \item The system must be robust to handle overriding command $c$ after which
    no request to increase the temperature may be
    issued. \hfill$\varphi_3=\LTLf(c\wedge c\Rightarrow\LTLg \neg a)$
  \item There will be no request to increase the temperature initially, but
    every heating will be preceded by a request which must be turned off after
    one time unit once the heating starts. Also the request is always issued
    until the heating starts.
    
    \textcolor{white}{a}\hfill$\varphi_4=\LTLg(a\LTLu b \wedge \LTLx
    b\Rightarrow(a\wedge\LTLx a\wedge\LTLx\LTLx\neg a))\wedge \neg a$
  \end{enumerate}

  Our consistency checking algorithm begins by checking in parallel the
  individual requirements $\varphi_1,\ldots,\varphi_4$ for consistency and also
  the whole set $\Gamma=\{\varphi_1,\ldots,\varphi_4\}$. It discovers that
  $\varphi_4$ is inconsistent on its own and that $\Gamma$ is also
  inconsistent. The fact that $\varphi_4$ is inconsistent is employed by not
  considering any of the sets containing $\varphi_4$,
  i.e. $\{\varphi_1,\varphi_4\}$, $\{\varphi_2,\varphi_4\}$,
  $\{\varphi_3,\varphi_4\}$, $\{\varphi_1,\varphi_2,\varphi_4\}$,
  $\{\varphi_1,\varphi_3,\varphi_4\}$, and
  $\{\varphi_2,\varphi_3,\varphi_4\}$. Next, the algorithm checks the supersets
  of the largest consistent sets, i.e. $\{\varphi_1,\varphi_2\}$,
  $\{\varphi_1,\varphi_3\}$, and $\{\varphi_2,\varphi_3\}$. Discovering that
  both $\{\varphi_1,\varphi_2\}$ and $\{\varphi_1,\varphi_3\}$ are inconsistent
  it terminates since $\{\varphi_1,\varphi_2,\varphi_3\}$ is necessarily also
  inconsistent. The smallest inconsistent subsets are thus $\{\varphi_4\}$,
  $\{\varphi_1,\varphi_2\}$, and $\{\varphi_1,\varphi_3\}$. \hfill$\triangle$
\end{example}

\section{Checking Completeness} \label{sec:compl}

The completeness checking is a little more involved: this is in fact the part
that provably cannot be fully automated. Hence the paper will first describe the
problem and then detail the semi-automatic solution proposed.

Let us assume that the user specifies three types of requirements: environmental
assumptions $\Gamma_{A}$, required behaviour $\Gamma_{R}$ and forbidden
behaviour $\Gamma_{F}$. The environmental assumptions represent the sensible
properties of the world a given system is to work in, e.g. ``\emph{The plane is
  either in the air or on the ground, but never in both these states at once}'',
in LTL this would read $\LTLg (ground\Leftrightarrow \neg air)$. The required
behaviour represents the functional requirements imposed on the system: the
system will not be correct if any of these is not satisfied, e.g. ``\emph{If the
  plane is ever to fly it must first be tested on the ground}'': $\LTLf air
\Rightarrow (ground\wedge tested)$. Dually, the forbidden behaviour contains
those patterns that the system must not display, e.g. ``\emph{Once tested the
  plane should be put into air within the next three time steps}'':
$\neg\ tested \LTLu (air\vee \LTLx air\vee \LTLx\LTLx air)$.

Assume henceforth the following simplifying notation for B{\"u}chi automata: let
$f$ be a propositional formula over capital Latin letters $A, B,\ldots$ barred
letters $\bar{A}, \bar{B},\ldots$ and Greek letters $\alpha, \beta,\ldots$,
where $A$ substitutes $\bigwedge_{\gamma\in \Gamma_{A}}\gamma$, $\bar{A}$ stands
for $\bigvee_{\gamma\in\Gamma_{A}}\gamma$ and all Greek letters represent simple
LTL formulae. Then $\mathcal{A}_f$ denotes such a B{\"u}chi automaton that
accepts all words satisfying the substituted $f$, e.g. $\mathcal{A}_{A\vee
  \bar{B} \wedge \varphi}$ accepts words satisfying $\bigwedge_{\gamma \in
  \Gamma_{A}}\gamma \vee \bigvee_{\gamma\in\Gamma_{B}}\gamma \wedge\varphi$. The
automaton $\mathcal{A}_A$ thus describes the part of the state space the user is
interested in and which the required and forbidden behaviour should together
submerge. That most commonly is not the case with freshly elicited requirements
and therefore the problem is the following: find sufficiently simple formulae
over $AP$ that would, together with formulae for $R$ and $F$, cover a large
portion of $\mathcal{A}_A$. In other words to find such $\varphi$ that
$\mathcal{A}_{R \vee \bar{F} \vee \varphi}$ covers as much of $\mathcal{A}_A$ as
possible.

In order to evaluate the size of the part of $\mathcal{A}_A$ covered by a single
formula, i.e. how \emph{much} of the possible behaviour is described by it, an
evaluation methodology for B{\"u}chi automata needs to be established. The plain
enumeration of all possible words accepted by an automaton is impractical given
the fact that B{\"u}chi automata operate over infinite words. Similarly, the
standard completeness metrics based on state coverage~\cite{CKV01,TK01} are
unsuitable because they do not allow for comparison of sets of formulae and they
require the underlying model. Equally inappropriate is to inspect only the
underlying directed graph because B{\"u}chi automata for different formulae may
have isomorphic underlying graphs.

The methodology proposed in this paper is based on the notion of
\emph{almost-simple} paths and \emph{directed partial coverage} function.

\begin{define} \label{def:almSim}
  Let $G$ be a directed graph. A \emph{path} $\pi$ in $G$ is a sequence of
  vertices $v_1, \ldots, v_n$ such that $\forall i: (v_i,v_{i+1})$ is an edge in
  $G$. A path is \emph{almost-simple} if no vertex appears on the path more than
  twice. The notion of almost-simplicity is also applicable to words in the case
  of B{\"u}chi automata.
\end{define}

With almost-simple paths one can enumerate the behavioural patterns of a
B{\"u}chi automaton without having to handle infinite paths. Clearly, it is a
heuristic approach and a considerable amount of information will be lost but
since all simple cycles will be considered (and thus all patterns of the
infinite behaviour) the resulting evaluation should provide sufficient
distinguishing capacity (as demonstrated in Section~\ref{sec:cas}).

Knowing which paths are interesting it is possible to propose a methodology that
would allow comparing two paths. There is, however, a difference between
B{\"u}chi automata that represent a computer system and those built using only
LTL formulae (ones that should restrict the behaviour of the system). The latter
automata use a different evaluation function $\hat{\nu}$ that assigns to every
edge a set of literals. The reason behind this is that the LTL-based automaton
only allows those edges (in the system automaton) for which their source vertex
has an evaluation compatible with the edge evaluation (now in the LTL
automaton).

\begin{define} \label{def:evalSim}
  Let $\mathcal{A}_1$ and $\mathcal{A}_2$ be two (LTL) B{\"u}chi automata over
  $AP$ and let $AP_L$ be the set of literals over $AP$. The \emph{directed
    partial coverage} function $\Lambda$ assigns to every pair of edge
  evaluations a rational number between 0 and 1, $\Lambda:2^{AP_L}\times$
  $2^{AP_L} \rightarrow\mathbb{Q}$. The evaluation works as follows
  \[\Lambda(A_1=\{l_{11},\ldots,l_{1n}\},A_2=\{l_{21},\ldots,l_{2m}\})=\left\{
    \begin{array}{ll}
      0 & \ \exists i,j: l_{1i}\equiv\neg l_{2j}\\
      p/m & \ \text{otherwise }\\
    \end{array} \right.
  \]
  where  $p=|A_1\cap A_2|$.
\end{define}

From this definition one can observe that $\Lambda$ is not symmetric. This is
intentional because the goal is to evaluate how much a path in $\mathcal{A}_2$
covers a path in $\mathcal{A}_1$. Hence the fact that there are some additional
restricting literals on an edge of $\mathcal{A}_1$ does not prevent automaton
$\mathcal{A}_2$ to display the required behaviour (the one observed in
$\mathcal{A}_1$).

The extension of coverage from edges to paths and to automata is based on
averaging over almost-simple paths. An almost-simple path $\pi_2$ of automaton
$\mathcal{A}_2$ covers an almost-simple path $\pi_1$ of automaton
$\mathcal{A}_1$ by $\Lambda(\pi_1,\pi_2)= \frac{\sum_{i=0}^{n}
  \Lambda(A_{1i},A_{2i})}{n}$ per cent, where $n$ is the number of edges and
$A_{ji}$ is the set of labels on $i$-th edge on $\pi_j$. Then automaton
$\mathcal{A}_2$ covers $\mathcal{A}_1$ by $\Lambda(\mathcal{A}_1,\mathcal{A}_2)
=\frac{\sum_{i=0}^{m} \max_{2i}\Lambda(\pi_{1i},\pi_{2i})}{m}$ per cent, where
$m$ is the number of almost-simple paths of $\mathcal{A}_1$ that end in an
accepting vertex. It follows that coverage of 100 per cent occurs when for every
almost-simple path of one automaton there is an almost-simple path in the other
automaton that exhibits similar behaviour.

\subsection{Implementation of Completeness Checking}

The high-level overview of the implementation of the proposed methodology is
based on partial coverage of almost-simple paths of $\mathcal{A}_A$. In other
words finding the most suitable path in $\mathcal{A}_{R \vee
  \bar{F}\vee\varphi}$ for every almost-simple path in $\mathcal{A}_A$, where
$\varphi$ is a sensible simple LTL formula that is proposed as a candidate for
completion. Finally, the suitability will be assessed as the average of partial
coverage over all edges on the path.

The output of such a procedure will be a sequence of candidate formulae, each
associated with an estimated coverage (a number between 0 and 1) the addition of
this particular formula would entail. The candidate formulae are added in a
sequence so that the best unused formula is selected in every round. Finally,
the coverage is always related to $\mathcal{A}_A$ and, thus, if some behaviour
that cannot be observed in $\mathcal{A}_A$ is added with a candidate formula
this addition will neither improve nor degrade the coverage.

\begin{example} \label{exa:eval}
  The method of B{\"u}chi automata evaluation will be partially exemplified
  using Figure~\ref{fig:exampleEval}. The example only shows the enumeration of
  almost-simple paths and the partial coverage of two paths. What remains to the
  complete methodology will be shown more structurally in
  Algorithm~\ref{alg:compl}. The enumeration of almost-simple paths of
  $\mathcal{A}_a$ in Figure~$1a)$ should be straightforward, part the fact that
  a path is represented as a sequence of edges for simplicity. Let us assume
  that $\mathcal{A}_b$ is the original automaton and $\mathcal{A}_c$ is being
  evaluated for how thoroughly it covers $\mathcal{A}_b$. There are 4
  almost-simple paths in $\mathcal{A}_b$, one of them is $\pi=\langle a;\neg
  a,b;c,a;a\rangle$. The partial coverage between the first edge of $\pi$ and
  the first edge of $\mathcal{A}_c$ (there is only one possibility) is $0.5$,
  since there is the excessive literal $d$. The coverage between the second
  edges is also $0.5$, but only because of $\neg c$ in $\mathcal{A}_c$; the
  superfluous literal $\neg a$ restricts only the behaviour of
  $\mathcal{A}_b$. Finally, the average similarity between $\pi$ and the
  respective path in $\mathcal{A}_c$ is $0.75$ and it is approximately $0.7$
  between the two automata.\hfill$\triangle$
\end{example}

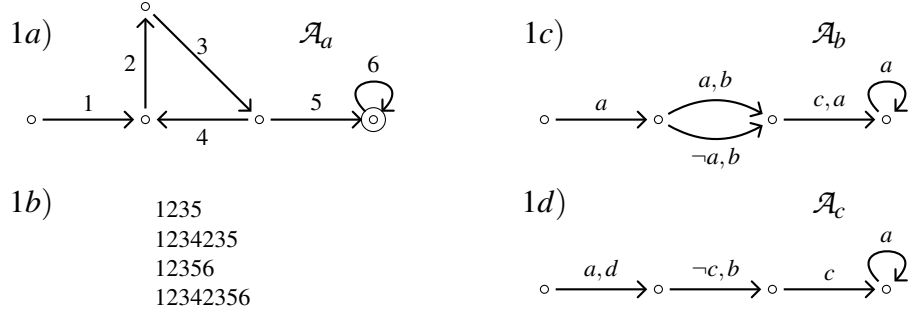
\begin{figure}[t]
  \centering
  \begin{tikzpicture}[scale=1.5,auto,swap]
  \clip (-0.2,-1.7) rectangle (7.7,1.2);
  \foreach \pos/\name in {{(0,0)/a}, {(1,0)/b}, {(1,1)/c}, {(2,0)/d}} {
    \node[vertex] (\name) at \pos {};
    \draw \pos circle (1pt);
  }

  \foreach \pos/\name in {{(4.5,0)/f}, {(5.5,0)/g}, {(6.5,0)/h}, {(7.5,0)/i}} {
    \node[vertex] (\name) at \pos {};
    \draw \pos circle (1pt);
  }

  \foreach \pos/\name in {{(4.5,-1.5)/j}, {(5.5,-1.5)/k}, {(6.5,-1.5)/l},
    {(7.5,-1.5)/m}} {
    \node[vertex] (\name) at \pos {};
    \draw \pos circle (1pt);
  }

  \foreach \pos/\name in {{(3,0)/e}} {
    \node[vertex] (\name) at \pos {};
    \draw \pos circle (1pt);
    \draw \pos circle (3pt);
  }

  \node[vertex] (A) at (0,0.75) {\large $1a)$};
  \node[vertex] (A) at (0,-0.75) {\large $1b)$};
  \node[vertex] (A) at (4.5,0.75) {\large $1c)$};
  \node[vertex] (A) at (4.5,-0.75) {\large $1d)$};

  \node[vertex] (B) at (1.5,-1.2)
       {$\begin{array}{l}
           1235 \\
           1234235 \\
           12356 \\
           12342356
         \end{array}
        $};

  \node[vertex] (C) at (2.5,0.75) {\large $\mathcal{A}_a$};
  \node[vertex] (C) at (7,0.75) {\large $\mathcal{A}_b$};
  \node[vertex] (C) at (7,-0.75) {\large $\mathcal{A}_c$};

  \begin{scope}[>=angle 90]
    \path[edge] (a) -- node [above] {1} (b);
    \path[edge] (b) -- node [left] {2} (c);
    \path[edge] (c) -- node [above] {3} (d);
    \path[edge] (d) -- node [below] {4} (b);
    \path[edge] (d) -- node [above] {5} (e);
    \path[edge] (e) to [in=45,out=135,loop] node [above] {6} (e);

    \path[edge] (f) -- node [above] {$a$} (g);
    \path[edge] (g) to [bend left=30] node [above] {$a,b$} (h);
    \path[edge] (g) to [bend right=30] node [below] {$\neg a,b$} (h);
    \path[edge] (h) -- node [above] {$c,a$} (i);
    \path[edge] (i) to [in=45,out=135,loop] node [above] {$a$} (i);

    \path[edge] (j) -- node [above] {$a,d$} (k);
    \path[edge] (k) -- node [above] {$\neg c,b$} (l);
    \path[edge] (l) -- node [above] {$c$} (m);
    \path[edge] (m) to [in=45,out=135,loop] node [above] {$a$} (m);

  \end{scope}
\end{tikzpicture}
  \caption{$1a)$ Example B{\"u}chi automaton $\mathcal{A}_a$; $1b)$ All
    almost-simple paths of $\mathcal{A}_a$; $1c)$ and $1d)$ are two different
    B{\"u}chi automata with relatively similar evaluations of almost-simple
    paths (see Example~\ref{exa:eval}).}
  \label{fig:exampleEval}
\end{figure}

The topmost level of the completeness evaluation methodology is shown as
Algorithm~\ref{alg:compl}. As input this function requires the three sets of
user defined requirements, the set of candidate formulae and the number of
formulae the algorithm needs to select. On lines~\ref{algLine:c1}
and~\ref{algLine:c2} the formulae for conjunction of assumptions and user
requirements (both required and forbidden) are created. They will be used later
to form larger formulae to be translated into B{\"u}chi automata and evaluated
for completeness but, for now, they need to be kept separate. Next step is to
enumerate the almost-simple paths of $\mathcal{A}_A$ for later comparison,
i.e. a baseline state space that the formulae from $\Gamma_{Cand}$ should cover.

\begin{algorithm}[t]
  {\normalsize
  \DontPrintSemicolon
\SetKwData{Aut}{A}\SetKwData{EvalA}{pathsBA}\SetKwData{Min}{max}
\SetKwData{Cur}{cur}
\SetKwFunction{eBA}{enumeratePaths}\SetKwFunction{tBA}{transform2BA}
\SetKwFunction{Print}{print}\SetKwFunction{aPC}{avrPathCov}
\SetKwInOut{Input}{Input}\SetKwInOut{Output}{Output}
\BlankLine
\Input{$\Gamma_{A}$, $\Gamma_{R}$, $\Gamma_{F}$, $\Gamma_{Cand}$, $n$}
\Output{Best coverage for $1\ldots n$ formulae from $\Gamma_{Cand}$}
\BlankLine
$\gamma_{Assum}\leftarrow\bigwedge_{\gamma\in\Gamma_{A}}\gamma$\label{algLine:c1}\;
$\gamma_{Desc}\leftarrow\bigwedge_{\gamma\in\Gamma_{R}}\gamma\vee\bigvee_{\gamma\in\Gamma_{F}}\gamma$
\label{algLine:c2}\;
$\Aut\leftarrow\tBA(\gamma_{Assum})$\;
$\EvalA\leftarrow\eBA(\Aut)$\;
\For{$i=1\ldots n$}{
  $\Min\leftarrow \infty$\;
  \ForEach{$\gamma\in\Gamma_{Cand}$}{
    $\gamma_{Test}\leftarrow\gamma_{Desc}\vee\gamma$
      \label{algLine:c3}\;
    $\Aut\leftarrow\tBA(\gamma_{Test})$\;
    $\Cur\leftarrow\aPC(\Aut,\EvalA)$\;
    \If{$\Min<\Cur$}{
      $\Min\leftarrow\Cur$\;
      $\gamma_{Max}\leftarrow\gamma$\;
    }
  }
  $\Print($ ``Best coverage in $i$-th round is $\Min$.'' $)$\;
  $\gamma_{Desc}\leftarrow\gamma_{Desc}\vee\gamma_{Max}$\;
  $\Gamma_{Cand}\leftarrow\Gamma{Cand}\setminus\{\gamma_{Max}\}$\;
}
\BlankLine

  \caption{Completeness Evaluation}
  \label{alg:compl}}
\end{algorithm}

The rest of the algorithm forms a cycle that iteratively evaluates all
candidates from $\Gamma_{Cand}$ (see line~\ref{algLine:c3} where the
corresponding formula is being formed). Among the candidate formulae the one
with the best coverage of the paths is selected and subsequently added to
the covering system.

Functions \texttt{enumeratePaths} and \texttt{avrPathCov} are similar extensions
of the BFS algorithms. Unlike BFS, however, they do not keep the set of visited
vertices to allow state revisiting (twice in case of \texttt{enumeratePaths} and
arbitrary number of times in case of \texttt{avrPathCov}). The
\texttt{avrPathCov} search is executed once for every path it receives as input
and stops after inspecting all paths to the length of the input path or if the
current search path is incompatible (see Definition~\ref{def:evalSim}).


\section{Experimental Evaluation}

All three sanity checking algorithms were implemented as an extension of the
parallel explicit-state LTL model checker DiVinE~\cite{BBC10}. From the many
facilities offered by this tool, only the LTL to B{\"u}chi translation was
used. Similarly as the original tool also this extension to sanity checking was
implemented using parallel computation.

\subsection{Experiments with Random Formulae}

The first set of experiments was conducted on randomly generated LTL
formulae. The motivation behind using random formulae is to demonstrate the
reduction in the number of consistency checks. A naive algorithm for detecting
inconsistent/redundant subsets requires an exponential number of such
checks. Generating a large number of requirement collections with varying
relations among individual requirements allows a more representative results
than a few, real-world collections.  

In order for the experiments to be as realistic as possible (avoiding trivial or
exceedingly complex formulae) requirements with various nesting depths were
generated. Nesting depth denotes the depth of the syntactic tree of a
formula. Statistics about the most common formulae show, e.g. in~\cite{DAC98},
that the nesting is rarely higher than 5 and is 3 on average. Following these
observations, the generating algorithm takes as input the desired number $n$ of
formulae and produces: $n/10$ formulae of nesting 5, $9n/60$ of nesting 1, $n/6$
of nesting 4, $n/4$ of nesting 2 and $n/3$ of nesting 3. Finally, the number of
atomic propositions is also chosen according to $n$ (it is $n/3$) so that the
formulae would all contribute to the same state space.

All experiments were run on a dedicated Linux workstation with quad core Intel
Xeon 5130 @ 2GHz and 16GB RAM. The codes were compiled with optimisation options
\texttt{-O2} using GCC version 4.3.2. Since the running times and even the
number of checks needed for completion of all proposed algorithms differ for
every set of formulae, the experiments were ran multiple times. The sensible
number of formulae starts at 8: for less formulae the running time is
negligible. Experimental tests for consistency and redundancy were executed for up
to 15 formulae and for each number the experiment was repeated 25 times.

Figure~\ref{fig:conTime} summarises the running times for consistency checking
experiments. For every set of experiments (on the same number of formulae) there
is one box capturing median, extremes and the quartiles for that set of
experiments. From the figure it is clear that despite the optimisation
techniques employed in the algorithm both median and maximal running times
increase exponentially with the number of formulae. On the other hand there are
some cases for which presented optimisations prevented the exponential blow-up
as is observable from the minimal running times.

\begin{figure}[t]
  \centering \includegraphics[width=0.8\textwidth]{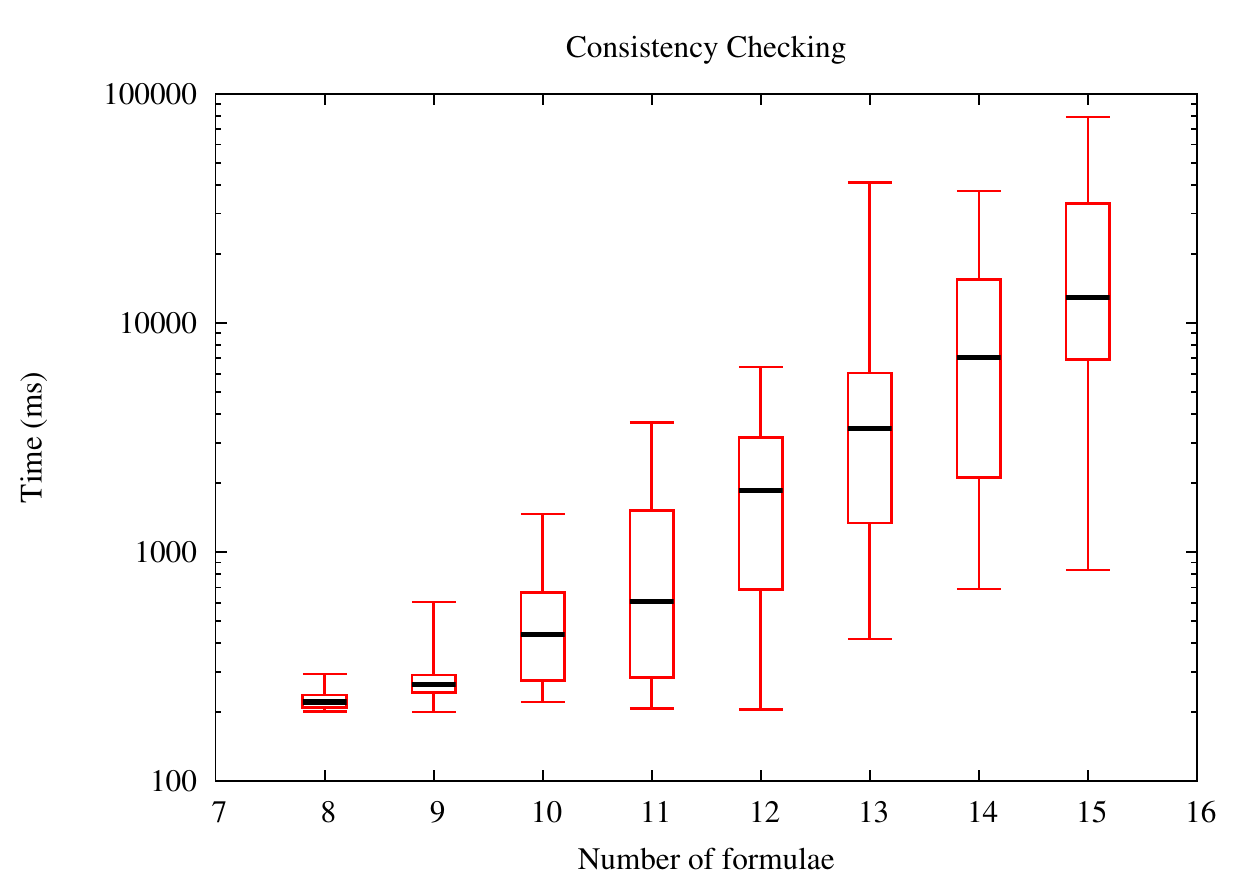}
  \caption{Log-plot summarising the time complexity of consistency
    checking.}
  \label{fig:conTime}
\end{figure}

Figure~\ref{fig:vacChecks} illustrates the discrepancy between the number of
combinations of formulae and the number of redundancy checks that were actually
performed. The number of combinations for $n$ formulae is $n*2^{n-1}$ but the
optimisation often led to much smaller number. As one can see from the
experiments on 9 formulae, it is potentially necessary to check almost all the
combinations but the method proposed in this paper requires on average less than
10 per cent of the checks and the relative number decreases with the number of
formulae.

The actual running times, as reported in Figure~\ref{fig:conTime}, depend
crucially on the chosen method for consistency checking: automata-based
Algorithm~\ref{alg:ver} in our case. Our algorithm for enumerating all smallest
inconsistent subsets is robust with respect to the consistency checking
algorithm used. Hence any other consistency (or realisability) tool could be
used, but that would only affect the actual running times, but not the number of
checks, reported in Figure~\ref{fig:vacChecks}.

\begin{figure}[t]
  \centering \includegraphics[width=0.8\textwidth]{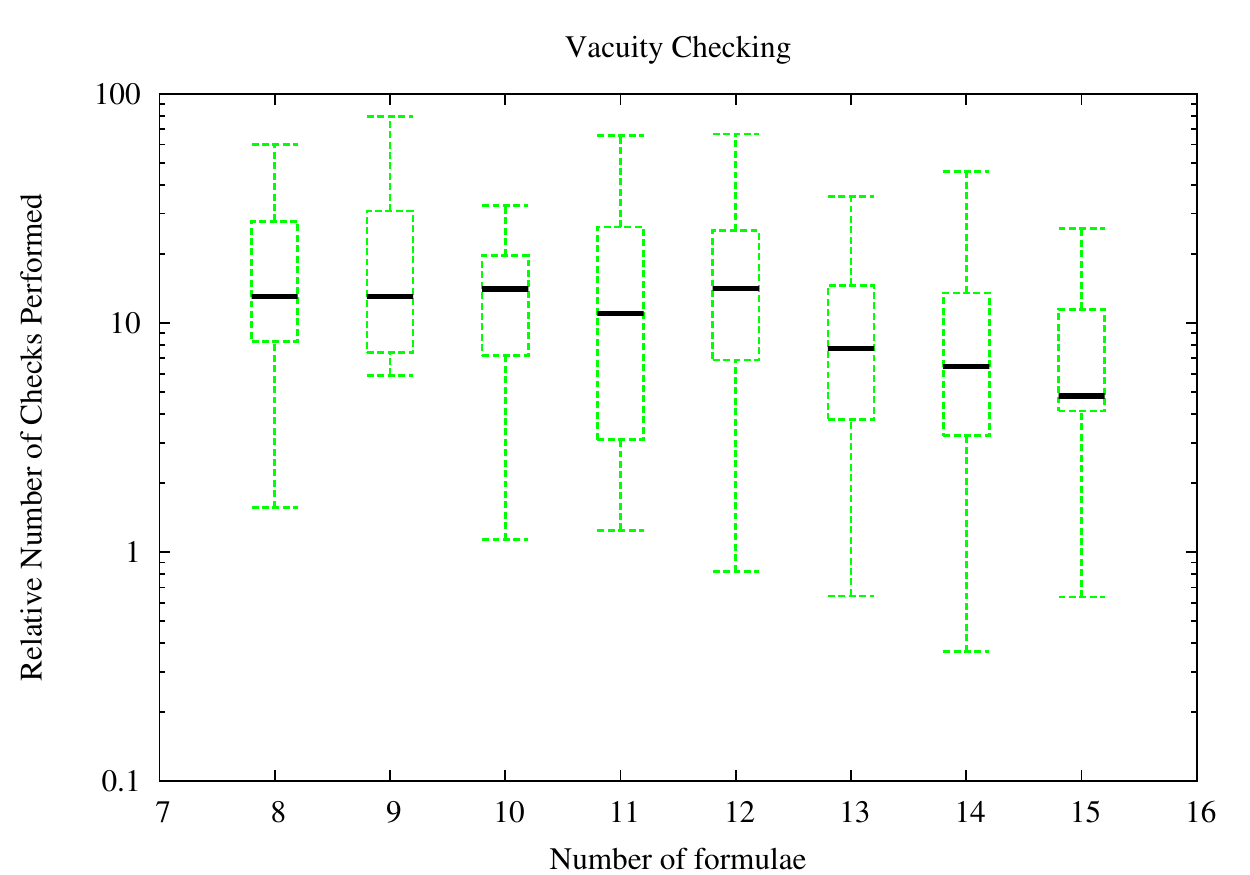}
  \caption{Log-plot with the relative number of checks for redundancy
    checking.}
  \label{fig:vacChecks}
\end{figure}

\subsection{Case Study: Aeroplane Control System} \label{sec:cas}

In order to demonstrate the full potential of the proposed sanity checking, we
apply all the steps in a sequence. First, the whole requirements document is
checked for consistency and redundancy and only the consistent and non-redundant
subset is then evaluated with respect to its completeness. A sensible exposition
of the effectiveness of completeness evaluation proposed in this paper requires
more elaborate approach than using random formulae as the input requirements
document.

For that purpose a case study has been devised that demonstrates the capacity to
assess coverage of requirements and to recommend suitable coverage-improving LTL
formulae. The requirements document was written by a requirements engineer whose
task was to evaluate to method, and we only use random formulae as candidates
for coverage improvement. The candidate formulae were built based on the atomic
proposition that appeared in the input requirements and only very simple
generated formulae were selected. It was also required that the candidates do
not form an inconsistent or tautological set. Alternatively,
pattern-based~\cite{DAC98} formulae could be used as candidates. The methodology
is general enough to allow semi-random candidates generated using patterns from
input formulae. For example if an implication is used as the input formula, the
antecedent may not be required by the other formulae which may not be according
to user's expectations.

The case study attempts to propose a system of LTL formulae that should control
the flight and more specifically the landing of an aeroplane. The LTL formulae
and the atomic propositions they use are summarised in
Figure~\ref{fig:case}. The requirements are divided into 3 categories similarly
as in the text: $A$ requirements represent assumptions and $R$ and $F$ stand for
required and forbidden behaviour, respectively. The grayed formulae were found
either redundant or inconsistent and were not included in the completeness
evaluation.

\begin{itemize}
\item[A1] The plane is on the ground only if its speed is less or equal to 200
  mph.
\item[A2] The plane will eventually land and every landing will result in the
  plane being on the ground.
\item[A3] If the plane is not landing its speed is above 200 mph.
\item[A4] Whenever the undercarriage is down, first the speed must be at most
  100 mph and second the plane must eventually land on the ground.
\item[A5] \textcolor{gray}{Whenever the plane is not landing its height above
  ground is not zero.}
\item[A6] \textcolor{gray}{At any time during the flight it holds that either
  the plane if flying faster than 200 mph or it eventually touches the ground.}
\vspace{1em}
\item[R1] The landing process entails that eventually: (1) the speed will
  remain below 200 mph indefinitely and (2) after some finite time the speed
  will also remain below 100 mph.
\item[R2] The landing process also entails that the plane should eventually slow
  down from 200 mph to 100 mph (during which the speed never goes above 200
  mph).
\item[R3] Whenever flying faster than 200 mph, the undercarriage must be
  retracted.
\item[R4] At some point in the future, the plane should commence landing, then
  detract the undercarriage until finally the speed is below 100 mph.
\item[R5] \textcolor{gray}{In the future the plane will be decommissioned: it
  will land and then either never flying faster than 100 mph or always slowing
  down below 200 mph.}
\item[R6] \textcolor{gray}{At some point in the future, the speed will be below
  200 mph and the undercarriage retracted. Immediately before that and while the
  speed is still above 200 mph the undercarriage is detracted in two consecutive
  steps.}
\vspace{1em}
\item[F1] The plane will eventually be on the ground after which it will never
  take off again.
\end{itemize}

\begin{figure}[t]
  \centering
  \begin{tikzpicture}[scale=1.5,auto,swap]
  \clip (-1.1,-1.1) rectangle (7.5,1.2);
  \node[vertex] (A) at (0,1) {\textbf{Atomic Propositions}};
  \node[vertex] (B) at (0,0)
    {$\begin{array}{rcl}
        a & \equiv & [\text{height}=0] \\
        b & \equiv & [\text{speed}\leq 200] \\
        l & \equiv & [\text{landing}] \\
        u & \equiv & [\text{undercarriage}] \\
        c & \equiv & [\text{speed}\leq 100]
      \end{array}$};

    \node[vertex] (C) at (4.2,1) {\textbf{LTL Requirements}};
  \node[vertex] (D) at (4.2,0)
    {$\begin{array}{lcllcl}
        A1 & : & \LTLg(a\Leftrightarrow b) &\ 
        R1 & : & \LTLg(l\Rightarrow \LTLf(\LTLg b\wedge \LTLf\LTLg c))\\

        A2 & : & \LTLf l\wedge \LTLg (l\Rightarrow \LTLf a) &\ 
        R2 & : & \LTLg(l\Rightarrow \LTLf(b \LTLu c))\\

        A3 & : & \LTLg(\neg l\Rightarrow\neg b) &\ 
        R3 & : & \LTLg(\neg b\Rightarrow\neg u)\\

        A4 & : & \LTLg(u\Rightarrow \LTLf a\wedge u\Rightarrow c) &\ 
        R4 & : & \LTLf(l \LTLu (u\LTLu c))\\

        A5' & : & \LTLg(\neg l\Rightarrow\neg a) &\
        R5' & : & \LTLf(l\wedge\LTLg(\neg c\vee\LTLf(\neg b)))\\

        A6' & : & \LTLg(\neg b\vee\LTLf\ a) &\
        R6' & : & \LTLf(\neg b\wedge\neg u\wedge\LTLx(\neg b\wedge u)\wedge\LTLx\LTLx(b\wedge u))\\

      \end{array}$};

    \node[vertex] (E) at (4,-1)
      {$\begin{array}{rcl}
          F1 & : & \LTLf(a\wedge \LTLg(\neg a))
        \end{array}$};
\end{tikzpicture}
  \caption{The two tables explain the shorthands for atomic propositions and
    list the LTL requirements.}
  \label{fig:case}
\end{figure}

\paragraph{Consistency}

The three categories of requirements are checked for consistency and redundancy
separately. The environmental assumptions form a consistent set. The set of
required behaviour contained two smallest inconsistent subsets. The requirement
R4 is inconsistent with R6, hence we have choose R4 as more important and
exclude R6 from the collection. The second inconsistent subset was: R1, R2,
R5. There we chose to preserve R1 and R2, leaving out the possibility to
decommission a plane.

\paragraph{Redundancy}

After regaining consistency of the required behaviour, our sanity checking
procedure detects no redundancy in this set. On the other hand there are two
redundant environmental assumptions. The assumption A5 is implied by the
conjunction of A1 and A3 and is thus redundant. It also holds that any
environment in which A2 and A3 are true also satisfies A6. Hence both A5 and A6
can be removed without changing the set of sensible behaviour in the
environment.

\paragraph{Completeness}

Initially, the remaining required and forbidden behaviour together does not
cover the environment assumptions at all, i.e. the coverage is 0. There simply
are so many possible paths allowed by the assumptions which are not considered
in the requirements, that our metric evaluated the coverage to 0. Given the way
the coverage is calculated, it must be the case that for every path in the
assumption automaton there was not a single path in the requirements automaton
that would not violate some of the edge labels of the assumptions. There is thus
a considerable space for improvement. The completeness improving process we are
about to describe begins by generating a set of candidate requirements among
which those with the best resulting coverage are selected. Table~\ref{tab:iter}
summarises the process.

The first formula selected by the Algorithm~\ref{alg:compl} and leading to
coverage of $9.1$ per cent was a simple $\varphi_1=\LTLg(\neg l)$: ``\emph{The
  plane will never land}''. Not particularly interesting per se, nonetheless
emphasising the fact that without this requirement, landing was never
required. The formula $\varphi_1$ should thus be included among the forbidden
behaviour. Unlike $\varphi_1$, the formula generated as second, $\LTLf(\neg
b\wedge\neg l)$ : ``\emph{At some point the plane will fly faster than 200 mph
  and will not be in the process of landing}'', should be included among the
required behaviour. This formula points out that the required behaviour only
specifies what should happen after landing, unlike the assumptions, which also
require flight. The third and final formula was $\LTLf(\neg l \LTLu (a\vee b))$:
``\emph{Eventually it will hold that the landing does not commence until either:
  (1) the plane is already on the ground or (2) the speed decreased below 200
  mph}''. This last formula connects flight and landing and its addition (among
the required behaviour) entails coverage of $54.9$ per cent.

There are two conclusion one can draw from this case study with respect to the
completeness part of sanity checking. First, the requirements are generated
automatically, they are relevant and already formalised, thus amenable to
further testing and verification. Second, the new requirements can improve the
insight of the engineer into the problem. Both of these properties are valuable
on their own, even though the method did not finish with 100 per cent
coverage. The method produced further, coverage-improving formulae but even the
best formulae only negligible improved the coverage and thus the engineer
decided to stop the process.

\begin{table}[t]
\caption{Iterations of the completeness evaluation algorithm for the Aeroplane
  Control System.}
\label{tab:iter}
\centering
\begin{tabular}{|l|l|l|l|}
\hline
iteration & suggested formula & resulting system & coverage of $\mathcal{A}_{A}$ by
$\mathcal{A}_n$\\ 
\hline
1 & $\varphi_1=\LTLg(\neg l)$ &
$\mathcal{A}_1=\mathcal{A}_{R\vee\overline{F}\vee\overline{\varphi_1}}$ &
9.1\%\\ 
\hline
2 & $\varphi_2=\LTLf(\neg b\wedge\neg l)$ &
$\mathcal{A}_2=\mathcal{A}_{R\vee\overline{F}\vee\overline{\varphi_1}\vee\varphi_2}$ & 39.4\%\\ 
\hline
3 & $\varphi_3=\LTLf(\neg l\LTLu(a\vee b))$ &
$\mathcal{A}_3=\mathcal{A}_{R\vee\overline{F}\vee\overline{\varphi_1}\vee\varphi_2\vee\varphi_3}$
& 54.9\%\\
\hline
\end{tabular}
\end{table}

\subsection{Alternative LTL to B{\"u}chi Translations}

Although the running time experiments were one of the priorities of the original
paper~\cite{BBB12}, their purpose was to investigate the effects of the proposed
optimisations and not the concrete time measurements. In other words, the
subject of our investigation was the asymptotic behaviour of the algorithms and
how effectively does the algorithm scale with the number of formulae in the
average and in the extrema. Given our recent cooperation with Honeywell and
their interest in checking sanity of real-world, industry-level sets of
requirements, the actual running times became of particular interest as well.

Initial experiments revealed, however, that on larger sets of requirements and
especially when more complex individual requirements were involved, the proposed
sanity checking algorithms were unable to cope with the computational
complexity. The bottleneck proved to be the algorithm translating LTL formulae
-- conjunctions of the original requirements -- into B{\"u}chi automata. The
translator incorporated in DiVinE is sufficiently fast on small formulae that
commonly occur in software verification and was never intended to be used with
larger conjunctions. Thus, to enable checking sanity of real-world sets of
requirements, we have reimplemented the algorithms to employ the
state-of-the-art LTL to B{\"u}chi translator SPOT~\cite{Dur11} in a tool called
Looney\footnote{As we are checking the sanity of requirements, the translator
  helps us to \emph{spot} the \emph{looney} ones. The tool is available at
  \url{http://anna.fi.muni.cz/~xbauch/code.html\#sanity}.}. As the case study
summarising our cooperation with Honeywell in Section~\ref{sec:eval} shows,
Looney is able to process much larger formulae than the original tool and also
to incorporate a larger number of formulae into the sanity checking process.

The reimplementation of consistency was relatively straightforward since SPOT
can be interfaced via a shared library with an arbitrary C++ application. Even
though this enforced us to use the internal implementation of formulae from
SPOT, which was different the one we had before, SPOT offers an adequate set of
formulae-manipulating operations that allowed the translation of our algorithms,
without their needing to be considerably modified. The reimplementation of the
coverage algorithms was complicated by the fact that the product of SPOT is a
\emph{transition-based} B{\"u}chi automaton, whereas the algorithm of
Section~\ref{sec:compl} expects \emph{state-based} automata: the difference is
that transitions rather than states are denoted as accepting.

This shift requires modification of the coverage evaluation, more precisely the
definition of $\Lambda(\mathcal{A}_1,\mathcal{A}_2)$ of the coverage among
automata. Before, the number $m$ in the denominator referred to the number of
almost-simple paths of $\mathcal{A}_1$ that ended in an accepting vertex
(state). We propose to redefine $m$ as the number of almost-simple paths that
end in an accepting edge (transition). It is clear that the coverage reported by
the original and the new metrics would differ: first, because the automata are
generated using different methods and may thus be structurally different and,
second, because the redefined metric considers different sets of almost-simple
paths. With the alternative definition of $m$ one can easily incorporate
translators that produce transition-based automata (though parsing these
automata may still require some degree of modification due to incompatibility of
internal structures representing the automata).

Therefore, in order to reestablish admissibility of the new evaluation as an
adequate coverage metric, we have repeated the experiment with the aeroplane
control system. As was to be expected given the above argumentation, the
concrete evaluation of individual formulae differed from our original
experiments. Specifically, the formula first generated ($\LTLg(\neg l)$) was
evaluated with coverage 12.9 per cent (instead of 9.1); the second formula
$\LTLf(\neg b\wedge\neg l)$ with 35.0 (instead of 39.4); and the third formula
$\LTLf(\neg l \LTLu (a\vee b))$ with 61.1 (instead of 54.9). Hence there are two
observation to be made. First, even though the coverage numbers differ for
individual formulae, the same triple was produced by the modified
algorithm. Moreover, the three formulae were produced in the same order, thus
the new metric behaves appropriately even when a new set of requirements is
obtained as a disjunction of the old set with the generated coverage-improving
formula.

\subsection{Industrial Evaluation} \label{sec:eval}

As a~further extension to the original paper~\cite{BBB12}, we have evaluated our
consistency and redundancy checking method on a~real-life sets of requirements
obtained in cooperation with Honeywell International. We did not extend this
part of evaluation to the completeness checking since that would require writing
a new set of requirements with separating the assumptions. The case study
consisted of four Simulink models; each has been associated with a~set of
formalised requirements in the form of LTL formulae. The precise requirements
are unfortunately confidential, but we provide statistics regarding the
complexity of the requirement documents. The formalisation of the requirements
that were originally given in natural language form has been done with the help
of the ForReq tool. The tool allows the user to write formal requirements with
the help of the specification patterns~\cite{DAC98} and some of their
extensions. See~\cite{BBBKR12} for a~detailed description of the tool.

Out of the four models, one (Lift) was a~toy model used in Honeywell to evaluate
the capabilities of ForReq, three (VoterCore, AGS, pbit) were models of
real-world avionics subsystems. As the focus of this paper is model-less sanity
checking, we ignored the models themselves and only considered the formalised
requirements. The results of our experiments are summarised in
Table~\ref{tbl:hw}.

The table reports that the largest collection of requirements consisted of 10
formulae, which can hardly be considered as a demonstration of scaling to
industrial level. Even though some of the requirements were extremely
complicated (with as many as 100 LTL operators), there is clearly a large space
for improvement. It would certainly be possible to use a different algorithm for
performing a single consistency check, but the number of these checks that needs
to be performed is an independent issue left for future work. In all four cases,
the whole set of requirements was consistent. As for redundancy, we have
identified two cases where a~formula was implied by another one. This is shown
in the redundancy result column. The number displayed in the mem(MB) column
represent the maximal (peak) amount of memory used.

The experiments, with the exception of the last one, were run on a dedicated
Linux workstation with Intel Xeon E5420 @ 2.5 GHz and 8 GB RAM. The last
experiment (pbit) was run on a Linux server with Intel X7560 @ 2.27 GHz and 450
GB RAM. The last experiment was a~consistency check only as the redundancy
checking took too much time. The reason for such great consumption of time 
(and memory) is that the formulae in the pbit set of requirements are rather
complex, some of them even using precise specification of time. For more
details about the time extension of LTL and its translation to standard LTL,
see~\cite{BBBKR12}.

\begin{table}
\caption{Results of consistency and redundancy checking on an
industrial case study.}
\label{tbl:hw}
\centering
\begin{tabular}{|c|c|c|r|r|c|r|r|}
\hline
model & no. of & \multicolumn{3}{c|}{consistency} & 
\multicolumn{3}{c|}{redundancy}  \\
name & \ formulae\  & 
\multicolumn{1}{c}{\ result\ } & 
\multicolumn{1}{c}{\ time(s)\ } & 
\multicolumn{1}{c|}{\ mem(MB)\ } & 
\multicolumn{1}{c}{\ result\ } & 
\multicolumn{1}{c}{\ time(s)\ } & 
\multicolumn{1}{c|}{\ mem(MB)\ }
\\\hline\hline
\ VoterCore\  & 3 & OK & 0.1 & 21.6 & OK & 0.4 & 22.4 \\\hline
AGS & 5 & OK & 0.1 & 23.5 & \ $\varphi_4 \Rightarrow \varphi_2$\  & 0.7 & 23.6 \\
\hline
Lift & 10 & OK & 73.6 & 58.5 & \ $\varphi_1 \Rightarrow \varphi_5$\  & \ \ 9460.5 & \ \ 1474.5 \\\hline
pbit & 10 & OK & \ \ 210879.0 & 22\,649.0 & --- & 
\multicolumn{1}{|c|}{---} & 
\multicolumn{1}{|c|}{---} \\\hline
\end{tabular}

\end{table}

\section{Conclusion}

This paper further expands the incorporation of formal methods into software
development. Aiming specifically at the requirements stage we propose a novel
approach to sanity checking of requirements formalised in LTL formulae. Our
approach is comprehensive in scope integrating consistency, redundancy and
completeness checking to allow the user (or a requirements engineer) to produce
a high quality set of requirements easier. The expert knowledge of LTL is not a
prerequisite for using the method, given the existence of automated translator
from English. On the other hand, there are certain limitations with respect to
the number of requirements checked concurrently, and larger collections may
require manually dividing into smaller sets.

Especially in the earliest stages of the requirements elicitation, when the
engineers have to work with highly incomplete, and often contradictory sets of
requirements, could our sanity checking framework be of great use. Other
realisability checking tools can also (and with better running times) be used on
the nearly final set of requirements but our process uniquely target these
crucial early states. First, finding all inconsistent subsets instead of merely
one helps to more accurately pinpoint the source of inconsistency, which may not
have been elicited yet. Second, once working with a consistent set, the
requirements engineer gets further feedback from the redundancy analysis, with
every minimal redundancy witness pointing to a potential lack of understanding
of the system being designed. Finally, the semi-interactive coverage-improving
generation of new requirements attempts to partly automate the process of making
the set of requirements as detailed as to prevent later development stages from
misinterpreting a less thoroughly covered aspect of the system. We demonstrate
this potential application on case studies of avionics systems.

The above described addition to the standard elicitation process may seem
intrusive, but the feedback we have gather from requirements engineers at
Honeywell were mostly positive. Especially given the tool support for each
individual steps of the process -- pattern-based automatic translation from
natural language requirements to LTL and the subsequent fully- and
semi-automated techniques for sanity checking -- a fast adoption of the
techniques was observed. Many of the engineers reported a non-trivial initial
investment to assume the basics of LTL, but in most cases the resulting overall
savings in effort outweighed the initial adoption effort. After the first week,
the overhead of tool-supported formalisation into LTL was less than 5 minutes
per requirement, while the average time needed for corresponding stages of
requirements elicitation decreased by 18 per cent.

We have also accumulated a considerable amount of experience from employing the
sanity checking on real-world sets of requirements during our cooperation with
Honeywell International. Of particular importance was the realisation that the
time requirements of our original implementation effectively prevented the use
of sanity checking on industrial scale. In order to ameliorate the poor timing
results we have reimplemented the sanity checking algorithms using the
state-of-the-art LTL to B{\"u}chi translator SPOT, which accelerated the process
by several orders on magnitude on larger sets of formulae. Another important
observation was that the concept of model-based vacuity is desirable to be
translated into the model-free environment more directly than as redundancy. We
thus extend the sanity checking process with the capacity to generate vacuity
witnesses (using the formally described theory of path-quantified
formulae). Finally, we summarise these and other observations in a case
consisting of four sets of requirements gathered during the development of
industry-scale systems.

One direction of future research is the pattern-based candidate selection
mentioned above. Even though the selected candidates were relatively sensible in
presented experiments, using random formulae can produce useless
results. Finally, experimental evaluation on real-life requirements and
subsequent incorporation into a toolchain facilitating model-based development
are the long term goals of the presented research. This paper also lacks formal
definition of \emph{total coverage} (which the proposed partial coverage merely
approximates). We intend to formulate an appropriate definition using uniform
probability distribution: which would also allow to compute total coverage
without approximation and would not be biased by concrete LTL to BA
translation. That solution, however, is not very practical since the underlying
automata translation is doubly exponential.

\bibliographystyle{plain}
\bibliography{faoc-sefm}

\end{document}